\newtheorem{theorem}{Theorem}
\newtheorem{definition}[theorem]{Definition}
\newtheorem{Simulation}{Simulation}
\newtheorem{lemma}{Lemma}
\newtheorem{proposition}{Proposition}
\newtheorem{remark}{Remark}
\newtheorem{assum}{Assumption}
\newcommand{\rr}{\mathbb{R}}
\begin{document}
\title{Governance, productivity and economic development\footnote{We would like to thank an associate editor and two anonymous referees for providing valuable comments and suggestions.}
}
\author{Cuong LE VAN\footnote{IPAG, CNRS, PSE, and TIMAS ThangLong University. Email: Cuong.Le-Van@univ-paris1.fr.} \and Ngoc-Sang PHAM\thanks{EM Normandie Business School, M\'etis Lab. Email: npham@em-normandie.fr.} \and Thi Kim Cuong PHAM\footnote{\emph{Corresponding author}. EconomiX,  Paris Nanterre University and VNU University of Economics and Business, Vietnam National University, Hanoi. Email: pham\_tkc@parisnanterre.fr. Phone: +33 1 40 97 77 21. Address: B\^atiment G - Maurice Allais, 200, Avenue de la R\'epublique, 92001 Nanterre cedex} \and Binh TRAN-NAM\footnote{UNSW Business School, UNSW Sydney.  Email: b.tran-nam@unsw.edu.au}}

\date{\today}
\maketitle

\begin{abstract}
This paper explores the interplay between transfer policies, R\&D, corruption, and economic development using a general equilibrium model with heterogeneous agents and a government. The government collects taxes, redistributes fiscal revenues, and undertakes public investment (in R\&D, infrastructure, etc.). Corruption is modeled as a fraction of tax revenues that is siphoned off and removed from the economy. We first establish the existence of a political-economic equilibrium. Then, using an analytically tractable framework with two private agents, we examine the effects of corruption and evaluate the impact of various policies, including redistribution and innovation-led strategies.\\
\newline

\noindent {\it JEL Classifications: D5, H54, 03}. \\
{\it Keywords:} Corruption, governance, R\&D investment, economic development, productivity,  general equilibrium.

\end{abstract}

\section{Introduction}

Corruption is one of the most classical topics, not only in economics but also in many other fields.\footnote{See, for example, \cite{op12} for a detailed survey in the context of developing countries.} 
The Transparency International defines ``corruption as the abuse of entrusted power for private gain'' (Transparency International (2025)). Of course, corruption is a multidimensional concept. A large number of studies, both theoretical and empirical, show that corruption has a harmful effect on economic performance. 
For instance, \cite{Mauro95} provides a cross-country empirical analysis during the period 1980-1983, which shows a significant negative association between corruption and investment, as well as growth. The corruption variable used in this empirical study corresponds to ``the degree to which business transactions involve corruption or questionable payments'' from the Business International Corporation definition.


\cite{go19} present empirical evidence based on data for 175 over the period 2012–2018 – a period during which the Corruption Perceptions Index (CPI) is comparable
across countries and over time. They show that corruption (i.e., CPI)  is negatively associated with economic growth, especially in autocracies and countries with low government effectiveness and rule of law. Similarly, using a panel of 142 countries from 1994 to 2014, \cite{cg18} shows that higher corruption is strongly associated with lower GDP growth and foreign investment ratio. 

Other empirical studies further indicate the adverse consequences of corruption on other dimensions of an economy, such as human capital accumulation and environmental quality. As underlined in \cite{Abdulla2021}, corruption has a negative effect on the stock of human capital and its elimination would increase aggregate output by 18-21\% on average. In parallel, \cite{wdzw18} analyze the relationship between economic growth, CO2 emissions, and corruption in a panel of BRICS countries from 1996 to 2015, concluding that control of corruption is crucial for mediating the relationship between economic growth and environmental quality and can contribute to reduction in CO2 emissions. In an analysis using a panel covering 21 Central and Eastern European countries, \cite{Petrova2020} finds that higher levels of corruption and bureacratic inefficiency are associated with lower levels of redistributive spending.  

Some theoretical studies model the impact of corruption of economic outcomes. \cite{aacc16} develop a growth model with innovation to analyze the effect of taxation and corruption on growth and innovation. They demonstrate that corruption lowers the optimal tax rate and reduces growth. Moreover, their empirical analysis across US states confirms that higher local corruption attenuates the positive effect of taxation on both growth and innovation, suggesting that if governments are aiming for economic growth, investing resources in fighting corruption makes a lot of sense.  \cite{mv20} analyze the link between corruption, economic growth, and inflation into a monetary growth model where a corruption sector allows households evading from taxation. Their finding predicts a U-shaped relationship between corruption and inflation, whereby beyond a threshold value of corruption, corruption and inflation move in the same direction
and lower the efficiency of seigniorage.


Conversely, a strand of literature suggests that, under some conditions, corruption and rent-seeking may potentially be associated with positive outcomes, in particular when corruption contributes to ``grease the wheels''. The seminar paper of \cite{Leff64} discusses a particular type of corruption: the practice of buying favors from the bureaucrats responsible for formulating and administering the government’s economic policies.   It is unsurprising that this kind of market activities may be beneficial, which leads to the  “grease the wheels” hypothesis. 
More recently, \cite{mw10}, using a panel of 69  both developed and developing countries when exploring the impact of corruption on aggregate efficiency, suggest that corruption (measured by CPI from Transparency International or the World Bank's corruption indicator) is less detrimental in countries where the rest of the institutional framework is weaker. In particular, in some countries where institutions are particularly ineffective, corruption may serve as a ``grease to the wheels''. 
At the local level, \cite{dhls22} highlight the importance of firm and city heterogeneity in shaping firms' productivity reaction to corruption, in particular, for firms that are export-oriented, more profitable, publicly owned, grow-faster, the corruption may positively impact their productivity. A recent study of \cite{hs25} focuses on the relationship between corruption and income inequality. This analysis, using a dataset covering up to 160 countries, concludes that corruption does not necessarily increase income inequality. If the objective of public policy is to reduce income inequality, anti-corruption efforts is not sufficient, but targeting unemployment, a robust driver of inequality, should be prioritized in government interventions. There are also empirical studies suggesting that tax corruption may, at least in the short run, promote innovation. For example, a study by \cite{dvtn21}, using firm-level panel data from Vietnam covering the period 2005–2015, finds that petty tax corruption has a positive effect on all types of firm-level innovative activities. One possible explanation is that firms may use the ``tax savings'' from corruption to finance business improvements, including various types of innovative inputs.

Despite these insights, few studies have simultaneously examined the interplay between corruption, transfer policies, innovation and  economic development, specially within a general-equilibrium framework. This paper develops then a two-period general equilibrium model to address the following key questions:
\begin{enumerate}
\item What are the macroeconomic effects of the interaction between taxation, redistribution and public investment in R\&D? Could they improve economic outcomes even in the presence of corruption? 

\item Are the economic outcomes in the case without intervention better than those under intervention when corruption is present? May the co-existence of corruption and innovation be Pareto-improving? 
\end{enumerate}


Our framework considers a finite number of heterogeneous agents who can borrow/lend through a financial market and produce a single output of the economy. She has her own production function. Each agent must pay a tax that equals a fraction of her income and receives a transfer from the government. Both tax rates and transfers are individualized and time-dependent. In the first period, the government also makes a public investment (including R\&D, infrastructure, education, etc.), which will improve agents' productivity in the second period. However, there may exist a fraction of collected tax revenue that disappears and fails to contribute to economic activity. We refer to this lost fraction as {\it corruption}.


Our paper makes three main contributions. Our first one (Proposition \ref{prop22}) is to establish the existence of a political-economic equilibrium with externality related to transfers and innovation. Our proof consists of two steps: i) given a vector of government tax revenues $\mathcal{T}$, we prove, under mild assumptions, the existence and uniqueness of a competitive equilibrium. This equilibrium is continuous in $\mathcal{T}$. Moreover, it generates tax revenue $T$ for the government; ii)  using the Brouwer fixed point theorem, we prove the existence of a vector $\mathcal{T}^*$ so that the equilibrium associated with this vector generates the government's revenue which coincides with $\mathcal{T}^*$, i.e., $T=\mathcal{T}^*$. So, there exists a political-economic equilibrium with transfers and innovation.\footnote{Our proof of the equilibrium existence is inspired by \cite{mitra98}, 
\cite{levan02}, \cite{levan04}, \cite{ab06}, \cite{blp25}. The main difference is that they work under optimal growth models (with a representative consumer) while we work with a two-period model with a finite number of consumers.}



Our second contribution is to explore the effects of various redistribution policies and public investment, as well as the impact of corruption, within a tractable framework that features individualized taxes and transfers. After computing the equilibrium outcomes, we show that public investment and R\&D efficiency have a positive impact on aggregate output, while corruption exerts a negative effect.

Our third contribution is to provide policy insight on redistribution and R\&D policy by comparing the GDPs in scenario S0. inaction (i.e., no government intervention) with one of the three following cases: S1. intervention with public investment in R\&D and redistribution policy; S2. intervention with only redistribution policy; and S3. distorted taxation, a very high tax on the most productive agent.

\begin{table}[h!]
\centering
\begin{tabular}{|p{3.5cm}|p{3cm}|p{2.5cm}|p{4.5cm}|}
\hline
\textbf{Scenario} & \textbf{Government Action} & \textbf{Corruption} & \textbf{Outcomes}\\
\hline
\textbf{S0}. Inaction & No intervention & None & Baseline outcome \\
\hline
\textbf{S1}. Public investment in R\&D and redistribution policy  & High public investment & Present or not & If investment is efficient, productivity increases; output rises,  even with corruption \\
\hline
\textbf{S2}. Targeted redistribution policy & Redistribution from low to high productivity agents & Present or not & Transfers boost investment where returns are highest, leading to higher aggregate output, even with corruption. \\
\hline
\textbf{S3}. Distorted taxation & High taxes on productive or high-investment agents & Present or not & Discourages investment; reduces aggregate investment and output. \\
\hline
\end{tabular}
\caption{Comparison of economic outcomes under different government intervention scenarios}
\label{tab:intervention_comparison}
\end{table}

While we prove that the corruption is always harmful, we argue that inaction may be worse than imperfect interventions for several reasons. Indeed, as shown in Table 1, we highlight two scenarios (S1, S2) in which imperfect interventions (i.e., with corruption) lead to better outcomes than the inaction scenario (S0). In scenario S1, when the government engages in significant public investment and the efficiency of that investment is high (in the sense that it meaningfully enhances firm productivity), firms become more productive and, thanks to this, the output will be higher than that in the inaction scenario (S0), even if a small fraction of the output is lost to corruption. In scenario S2, the government taxes agents with low productivity and low discount factors and transfers resources to the most productive agents. In this case, aggregate investment and output would increase, even in the presence of pretty corruption.

These insights enable us to understand why corruption and relatively good outcomes can coexist, as observed in some empirical studies. By the way, our theoretical results contribute to clarifying the ``grease to the wheels'' hypothesis. However, we also show that poorly designed redistribution policies, such as imposing high taxes on agents having high productivity or high investment rates (scenario S3, distorted taxation), can be detrimental, as they may reduce aggregate investment and overall production.\footnote{In reality, it is not easy to determine whether a distorted tax results from corruption or not.} 




From a theoretical point of view, our paper is related to \cite{dl02}. There are, however, several key differences. First, \cite{dl02} studies an optimal growth model (with one representative agent) while our model has heterogeneous agents. Second,  \cite{dl02} considers a corruption regarding international aid (in the form of a loan) 
while we study corruption via the redistribution process.\footnote{In \cite{dl02}, corruption means that a fraction of aid is diverted by some bureaucrats. This fraction can disappear from the country or come back to the economy as an extra-consumption or extra-investment. In our paper, corruption means that a fraction of tax revenue disappears from the country. \cite{dl02} show that international aid may be beneficial for economic growth even the corruption takes place if international aid (or corrupted amount from aid) is used to increase the aggregate investment or the incentive of private firms).}  

The remainder of the paper is organized as follows. Section \ref{model} presents a general equilibrium model with public investment and corruption.  Section \ref{existence} investigates the existence of equilibrium, while Section  \ref{analyses} addresses the effects of different policies and corruption. Section  \ref{conclusion} concludes. Technical proofs are presented in Appendix \ref{appendix}.

\section{A model with transfers, public investment, and corruption}
\label{model}
\subsection{Individual choice}

We consider a two-period economy with $m$ heterogenous agents ($i=1,\ldots,m$)  and a government.  There is one good. At date $0$, the agent $i$ is endowed $w_{i,0}$ units of good. She has to pay a tax $\tau_{i,0} w_{i,0}$ and also receives $\gamma_{i,0}T_0$ from the government. The net transfer received by agent $i$ is $\gamma_{i,0}T_0-\tau_{i,0} w_{i,0}$ which can be positive or negative.  We require that $\tau_{i,0},\gamma_{i,0}$ are in the interval $[0,1]$.

Each agent can borrow or lend an amount $ b_{i,0} $ at date $0$ with the real return $R_1$, which is endogenous. Assume that there is no borrowing constraint. She can invest in capital with an amount of $k_{i,1}$ to produce at date $1$ following the technology which is characterized by the function $A_iF_i(k_{i,1})$ where $A_i$ represents the productivity or the technological level. 

At date $0$, each agent chooses current and future consumption ($c_{i,0},c_{i,1}$), capital investment ($k_{i,1}$), and saving or borrowing ($ b_{i,0} $) to maximize her intertemporal utility. The problem of agent $i$ can be written as:
\begin{subequations}\label{agentiproblem}
\begin{align}
&\max_{(c_{i,0},c_{i,1}, k_{i,1}, b_{i,0} )}u_i(c_{i,0})+\beta_iu(c_{i,1})\\
\text{subject to constraints: } &c_{i,0}+k_{i,1}\leq (1-\tau_{i,0})w_{i,0}+ b_{i,0} +\gamma_{i,0}T_0\\
&c_{i,1}\leq (1-\tau_{i,1})A_iF_i(k_{i,1})-R_1 b_{i,0} +\gamma_{i,1}T_1\\
&c_{i,0}\geq 0, c_{i,1}\geq 0, k_{i,1}\geq 0
\end{align}
\end{subequations}
where $\beta_i\in (0,1)$ is the rate of time preference of agent $i$. 

The redistribution policy at date $1$ is represented by the rates $\tau_{i,1}$ and $\gamma_{i,1}$ which are also in the interval $[0,1)$.

We assume that the productivity $A_i$ depends on the public investment in the sense that $${A_i=\mathcal{A}_i(D_0)}$$ where $D_0$ represents the public investment which includes investment in R\&D, education, public infrastructure, etc. (\cite{pp20}). We can relate individual productivity $A_i$ to innovation induced by public investment. The amount of public investment $D_0$ is taken as given by any agent.

We require the following basic assumption about the function $\mathcal{A}(\cdot)$.
\begin{assum}\label{assum_Ai}
The function $\mathcal{A}(\cdot): \rr_+\to \rr_+$ is continuous and increasing. $\mathcal{A}_i(0)>0$.
\end{assum}
This assumption implies that without public investment, the production function of agent $i$ becomes $ \mathcal{A}_i(0)F_i(k_{i,1}) $.


\subsection{Government}
At date $0$, the government collects an amount $\tau_{i,0} w_{i,0}  \equiv T_0$ of good from each agent $i$ and redistributes it in the form of transfers $\gamma_{i,0}T_0$. Moreover, the government spends $D_0=\gamma_{d,0}T_0$ in  public investment.

We note that individuals can borrow or lend in the first period 0, while the government cannot.  As a result, government budgets are balanced in both periods. Its budget constraint at date $0$ is written as: 
\begin{align}
\sum_{i=1}^m\tau_{i,0}w_{i,0}=T_0=G_0=\sum_{i=1}^m\gamma_{i,0}T_0+\gamma_{d,0}T_0+(1-\sum_{i=1}^m\gamma_{i,0}-\gamma_{d,0})T_0
\end{align}

At date $1$, the government collects an amount $\tau_{i,1}A_iF_i(k_{i,1})$ from each agent and does the redistribution. Since we consider a two-period model, we abstract from R\&D in the second period.
\begin{align}
\sum_{i=1}^m\tau_{i,1}A_iF_i(k_{i,1})=T_1=G_1=\sum_{i=1}^m\gamma_{i,1}T_1+(1-\sum_{i=1}^m\gamma_{i,1})T_1.
\end{align}



The amounts $(1-\sum_{i=1}^m\gamma_{i,0}-\gamma_{d,0})T_0$ and $(1- \sum_{i=1} \gamma_{i,1})T_1$ are retained by the government. These amounts are not returned to the economy, reflecting the presence of corruption.\footnote{Alternatively, we can view the disappeared public resources as governance inefficiency or public transaction costs of government spending.}

Let us clarify some notions.
\begin{definition}The list $(\tau_{i,t},\gamma_{i,t})_{i=1}^m$ is the redistribution policy of the government at date $t$. The amount $\gamma_{d,0}T_0$ represents the government effort in  R\&D  at date $0$. We say that 
\begin{itemize}
   \item[-] There is a corruption at date $0$ if the fraction $\gamma_{c,0}\equiv (1-\sum_{i=1}^m\gamma_{i,0}-\gamma_{d,0})$ is strictly positive.  
\item[-] There is a corruption at date $1$ if the fraction $\gamma_{c,1}\equiv 1-\sum_{i=1}^m\gamma_{i,1}$ is strictly positive. 
\item[-]There is no intervention if taxes, transfers and R\&D are zero, i.e. $\tau_{i,t}=\gamma_{i,t}=\gamma_{d,t}=0$ for any $i$ and for any $t=0,1$.
\end{itemize}
\end{definition}

Precisely, we consider that corruption exists if a part of the resources disappears from the economy. This part of the resources corresponds to the positive fraction $1- \sum_{i=1} \gamma_{i,0} - \gamma_{d,0}$ at date 0 and  $1- \sum_{i=1} \gamma_{i,1}$ at date 1. We note that other papers add diverted resources to household budget as a corruption consumption (\cite{dl02}) or corruption income (\cite{aacc16}). 

It should be noticed that when there is no intervention, there is no corruption, and the government is not analyzed in this economy. Our modeling is similar to a recent empirical study, which shows that corruption may distort the public spending structure (\cite{ma25}). When analyzing the impact of corruption on the allocation of public expenditures across 45 sub-Saharan African countries, these authors show that corruption leads to a significant decrease in capital expenditures, diverting resources away from long-term investment projects. 

We impose the following assumptions on utility, production functions, and other constraints.
\begin{assum}\label{assum-agenti}$w_{i,0}>0, \tau_{i,t}\in [0,1), \gamma_{i,t},\gamma_{d,0}\in [0,1]$, $\gamma_{d,1} = 0$, and $\sum_{i=1}^m\gamma_{i,t}+\gamma_{d,t}\leq 1$ for any $i$, for any $t$.

For any $i$, the utility function $u_i$ is twice continuously differentiable, strictly increasing, strictly concave and $u_i'(0)=\infty$ while the production function $F_i$ is continuously differentiable, concave, strictly increasing, and $F_i(0)=0$.
\end{assum}

\subsection{General equilibrium}



We now present our notion of equilibrium.

\begin{definition}[political-economic equilibrium]%
\label{def-equilibrium-2} A political-economic equilibrium with transfers and public investment is a non-negative list 

$$\Big((c_{i,0},c_{i,1},k_{i,1}, b_{i,0} )_{i=1}^m, R_1,T_0,T_1,G_0, G_1, D_0 \Big)$$ satisfying the following conditions:

\begin{enumerate}
\item $R_1>0$.
\item\label{defpoint2} $G_0=T_0=\sum_{i=1}^m\tau_{i,0} w_{i,0},$ $G_1=T_1=\sum_{i=1}^m\tau_{i,1}\mathcal{A}_i(D_0)F_i(k_{i,1})$, $D_0=\gamma_{d,0} T_0$.
\item Given $(R_1, T_0,T_1,G_0, G_1, D_0)$, the allocation $(c_{i,0},c_{i,1},k_{i,1}, b_{i,0} )$ is a solution to the problem of agent $i$.

\item Market-clearing conditions:
\begin{align}
\sum_{i=1}^m(c_{i,0}+k_{i,1})&=\sum_{i=1}^m(1-\tau_{i,0})w_{i,0}+\sum_{i=1}^m\gamma_{i,0} T_0\\
\sum_{i=1}^mc_{i,1}&=\sum_{i=1}^m(1-\tau_{i,1})\mathcal{A}_i(D_0)F_{i}(k_{i,1})+\sum_{i=1}^m\gamma_{i,1}T_1\\
\sum_{i=1}^m b_{i,0} &=0.
\end{align} 
\end{enumerate}
\end{definition}

\section{Existence of equilibrium}
\label{existence}

The equilibrium defined in Definition \ref{def-equilibrium-2} involves externalities and endogenous transfers, which makes the task of establishing its existence non-trivial. This section aims to prove its existence in two steps.  First, given government tax revenues $T_0$ and $T_1$, we prove the existence and uniqueness of a competitive equilibrium. This equilibrium is continuous in $(T_0, T_1)$. Moreover, it generates tax revenues for the government. Secondly,  using the Brouwer fixed point theorem, we prove the existence of a couple ($T_0^*, T_1^*$) so that the equilibrium associated with this couple generates the government's revenues, which are exactly ($T_0^*, T_1^*$). So, there exists a political-economic equilibrium with transfers and innovation in Definition \ref{def-equilibrium-2}.

\paragraph{Step 1.} Let us start by introducing the notion of the competitive equilibrium given government intervention $(T_0, T_1, D_0)$.

\begin{definition}[competitive equilibrium]\label{def-equilibrium-0} Let $(T_0,T_1,D_0)\in \rr_+^3$ be given and $D_0\leq T_0$. A competitive  equilibrium is a list $\big((c_{i,0},c_{i,1},k_{i,1}, b_{i,0} )_{i=1}^m, R_1\big)$ satisfying the following conditions:
\begin{enumerate}
\item $R_1>0$.
\item Given $R_1$, the allocation $(c_{i,0},c_{i,1},k_{i,1}, b_{i,0} )$ is a solution to the agent i's  problem.

\item Market-clearing conditions:
\begin{align}
\sum_{i=1}^m(c_{i,0}+k_{i,1})&=\sum_{i=1}^m(1-\tau_{i,0})w_{i,0}+\big(\sum_{i=1}^m\gamma_{i,0}\big)T_0\\
\sum_{i=1}^mc_{i,1}&=\sum_{i=1}^m(1-\tau_{i,1})\mathcal{A}_i(D_0)F_{i}(k_{i,1})+\big(\sum_{i=1}^m\gamma_{i,1}\big)T_1\\
\sum_{i=1}^m b_{i,0} &=0.
\end{align} 
\end{enumerate}
\end{definition}
\begin{remark}\label{remark2}
Denote $ W_0\equiv \sum_{i=1}^mw_{i,0}$ the aggregate endowment at date $0$ and $$\bar{T}_1\equiv \max_{(k_1,\ldots,k_m)\in \rr_+^m: \sum_ik_i\leq W_0 }\tau_{i,1}\mathcal{A}_i(\gamma_{d,0}W_0)F_i(k_{i,1}),$$
which is an upper bound of the tax revenue at date $1$. Then, we define  $\bar{W}\equiv \max(\sum_{i=1}^m\tau_{i,0} w_{i,0},\bar{T}_1)$.

In any political-economic equilibrium (Definition \ref{def-equilibrium-2}), according to the market-clearing conditions, we observe that 
\begin{subequations}
\begin{align}
T_0&=\sum_{i=1}^m\tau_{i,0} w_{i,0}\leq  \bar{W}\\
 \sum_{i}k_{i,1}&\leq \sum_{i=1}^m(c_{i,0}+k_{i,1})=\sum_{i=1}^m(1-\tau_{i,0})w_{i,0}+\big(\sum_{i=1}^m\gamma_{i,0}\big)T_0\leq W_0\leq \bar{W} \\
T_1&=\sum_{i=1}^m\tau_{i,1}\mathcal{A}_i(D_0)F_i(k_{i,1})= \sum_{i=1}^m\tau_{i,1}\mathcal{A}_i(\gamma_{d,0}W_0)F_i(k_{i,1}) \leq \bar{T}_1\leq \bar{W}.
\end{align}
\end{subequations}
\end{remark}

 For each $(T_0,T_1)\in [0,\bar{W}]^2$ and $D_0=\gamma_{d,0}T_0$ where $\gamma_d\in [0,1]$, let us denote $\mathcal{E}_1(T_0,T_1)$ the set of competitive equilibria. By using the standard argument in the literature (see \cite{lvp16} for instance), we can prove that the equilibrium set $\mathcal{E}_1(T_0, T_1)$ is not empty.

 \begin{lemma}\label{lemmaexistence}Under Assumptions  \ref{assum_Ai} and \ref{assum-agenti}, there exists a competitive equilibrium. 
\end{lemma}

It is well known (see \cite{mc95}, Chapter 17, for instance) that without additional assumptions, the competitive equilibrium may not be unique. To obtain the uniqueness of competitive equilibrium, we require the following assumption. 
\begin{assum}\label{assumption-additional} Assume that, for any $i$, $F_i$ is strictly concave, $F_i'(0)=\infty$ and $u_i^{\prime}(c)+cu_i^{\prime \prime}(c)\geq 0$ for any $c>0$.
 \end{assum}

\begin{proposition}\label{prop12}Let Assumptions  \ref{assum_Ai}, \ref{assum-agenti}, and \ref{assumption-additional}  be satisfied.  Then there exists a unique competitive equilibrium, and it is continuous in $(T_0,T_1,D_0)$.
\end{proposition}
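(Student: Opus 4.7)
The plan is to reduce the equilibrium condition to finding a zero of a scalar bond-market excess-demand function of $R_1$, and then to obtain existence, uniqueness and continuity from monotonicity arguments that exploit Assumption \ref{assumption-additional}.

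\textbf{Step 1 (Scalarizing the agent's problem).} Fix $(T_0,T_1,D_0)$ with $D_0\leq T_0$ and put $A_i=\mathcal{A}_i(D_0)>0$ by Assumption \ref{assum_Ai}. Because no borrowing constraint is imposed on $b_{i,0}$, I eliminate it between the two dated constraints and obtain the single intertemporal budget
\[
c_{i,0}+\frac{c_{i,1}}{R_1}+k_{i,1}-\frac{(1-\tau_{i,1})A_iF_i(k_{i,1})}{R_1}\leq (1-\tau_{i,0})w_{i,0}+\gamma_{i,0}T_0+\frac{\gamma_{i,1}T_1}{R_1}.
\]
Because $k_{i,1}$ appears only on the left, I first solve $\pi_i(R_1)\equiv\max_{k\geq 0}\{(1-\tau_{i,1})A_iF_i(k)-R_1 k\}$; Assumption \ref{assumption-additional} ($F_i$ strictly concave with $F_i'(0)=\infty$) yields a unique, positive, continuous, strictly decreasing maximiser $k_i^*(R_1)$. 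What remains is the two-period consumption problem
\[
\max_{c_{i,0},c_{i,1}\geq 0}\;u_i(c_{i,0})+\beta_iu_i(c_{i,1})\quad\text{s.t.}\quad c_{i,0}+c_{i,1}/R_1\leq W_i(R_1),
\]
with $W_i(R_1)\equiv (1-\tau_{i,0})w_{i,0}+\gamma_{i,0}T_0+\gamma_{i,1}T_1/R_1+\pi_i(R_1)/R_1$. Assumption \ref{assum-agenti} (strict concavity, Inada $u_i'(0)=\infty$) gives a unique interior solution $(c_{i,0}(R_1),c_{i,1}(R_1))$, continuous in $R_1$ and satisfying the Euler equation $u_i'(c_{i,0})=\beta_iR_1\,u_i'(c_{i,1})$. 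Setting $b_{i,0}(R_1)\equiv c_{i,0}(R_1)+k_i^*(R_1)-(1-\tau_{i,0})w_{i,0}-\gamma_{i,0}T_0$ recovers the bond demand.

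\textbf{Step 2 (Existence of an equilibrium interest rate).} By Walras' law, simultaneous clearing of the two consumption markets is equivalent to the scalar bond-market condition $Z(R_1)\equiv\sum_{i=1}^m b_{i,0}(R_1)=0$, and $Z$ is continuous on $(0,\infty)$. As $R_1\to 0^+$ the production FOC forces $k_i^*(R_1)\to\infty$, whence $Z(R_1)\to +\infty$. As $R_1\to\infty$, the envelope identity $\pi_i'(R_1)=-k_i^*(R_1)$ shows $W_i(R_1)\to e_i\equiv (1-\tau_{i,0})w_{i,0}+\gamma_{i,0}T_0>0$, and the Euler equation together with the hypothesis $u_i'(c)+cu_i''(c)\geq 0$ (so that $c\mapsto cu_i'(c)$ is non-decreasing) forces $c_{i,1}(R_1)/R_1$ to converge to a strictly positive limit, so $c_{i,0}(R_1)$ is eventually strictly below $e_i$; hence $Z(R_1)<0$ for $R_1$ large. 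The intermediate value theorem then provides $R_1^*\in(0,\infty)$ with $Z(R_1^*)=0$.

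\textbf{Step 3 (Uniqueness and continuity in $(T_0,T_1,D_0)$).} I show each $b_{i,0}(\cdot)$ is strictly decreasing. Differentiating the Euler equation and the budget identity in $R_1$ and eliminating $c_{i,1}'(R_1)$ gives
\[
c_{i,0}'(R_1)\bigl[u_i''(c_{i,0})+\beta_iR_1^2u_i''(c_{i,1})\bigr]=\beta_i\bigl[u_i'(c_{i,1})+c_{i,1}u_i''(c_{i,1})\bigr]+\beta_iR_1^2u_i''(c_{i,1})W_i'(R_1).
\]
The bracket on the left is strictly negative; the first term on the right is $\geq 0$ by Assumption \ref{assumption-additional}, and by the envelope theorem $W_i'(R_1)=-[\gamma_{i,1}T_1+\pi_i(R_1)+R_1k_i^*(R_1)]/R_1^2<0$, which makes the second term strictly positive. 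Therefore $c_{i,0}'(R_1)<0$, and since $k_i^*(R_1)$ is also strictly decreasing, $b_{i,0}'(R_1)<0$, so $Z$ is strictly decreasing and $R_1^*$ unique. Since $Z$ is jointly continuous in $(T_0,T_1,D_0,R_1)$ and strictly decreasing in $R_1$, the standard implicit-zero argument yields continuity of $R_1^*(T_0,T_1,D_0)$, and the individual demands $c_{i,0},c_{i,1},k_i^*,b_{i,0}$ inherit continuity.

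\textbf{Main obstacle.} The crucial point is signing $c_{i,0}'(R_1)$ in Step 3: a rise in $R_1$ lowers $W_i(R_1)$ and makes date-$0$ consumption relatively more expensive, but for an agent who is a net recipient of date-$1$ transfers or production profit the wealth and substitution effects on $c_{i,0}$ need not agree in sign. The hypothesis $u_i'(c)+cu_i''(c)\geq 0$ --- equivalently, $(cu_i'(c))'\geq 0$, or intertemporal elasticity of substitution at least one --- is exactly what forces the two effects to point the same way, and it is also what prevents $c_{i,0}(R_1)$ from drifting up to $e_i$ in the boundary analysis of Step 2, guaranteeing $Z(R_1)<0$ for large $R_1$.
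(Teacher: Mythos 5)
Your proof is correct and its core coincides with the paper's: the decisive step in both is showing that each agent's bond demand is strictly decreasing in $R_1$ by differentiating the Euler equation and using $u_i'(c)+cu_i''(c)\geq 0$ to sign the wealth term (your identity for $c_{i,0}'(R_1)$ is exactly the paper's identity for $b_{i,0}'(R_1)-k_{i,1}'(R_1)$ after rewriting $W_i'$ via the envelope theorem), so that market clearing $\sum_i b_{i,0}=0$ pins down $R_1$ uniquely. You are somewhat more self-contained than the paper, which delegates existence to a ``standard approach'' citation and leaves continuity implicit, whereas you supply an explicit intermediate-value argument on the excess bond demand $Z$ and an implicit-zero argument for continuity in $(T_0,T_1,D_0)$.
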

\begin{proof}
See Appendix \ref{appendix}.\end{proof}
Let us explain the intuition in Proposition \ref{prop12}. Condition $F_i'(0)=\infty$ ensures that $R_1=(1-\tau_{i,1})A_iF_i'(k_{i,1})$. Since $F_i$ is strictly concave, $k_{i,1}$ is a differentiable function of $R_1$.\footnote{With linear production functions, $k_{i,1}$ may be zero for some agent $i$ (see Proposition \ref{linear2-result} below and Remark \ref{continuum} in Appendix  \ref{appendix}).} Then, thanks to the assumption $u_i^{\prime}(c)+cu_i^{\prime \prime}(c)\geq 0$, $\forall c>0$, we can prove that $b_{i,0}$ is strictly increasing in $R_1$. So, the market clearing condition $\sum_{i}b_{i,0}=0$ implies that $R_1$ is uniquely determined. By consequence, the equilibrium is unique and is continuous in $(T_0, T_1, D_0)$. 

\paragraph{Step 2.} Our second step is to prove the existence of a political-economic equilibrium with transfers and public investment in Definition \ref{def-equilibrium-2}. The following claim is immediate from the definition of political-economic equilibrium.
\begin{lemma}\label{selectionlemma0}
There exists a political-economic equilibrium  if and only if there exist $T_1^*\geq 0$ and a competitive equilibrium $\big((c_{i,0}^*,c_{i,1}^*,k_{i,1}^*, b_{i,0}^*)_{i=1}^m, R_1^*\big)$ associated to the tax $T_1^*$ such that
\begin{align}
T_1^*=\sum_{i=1}^m\tau_{i,1}\mathcal{A}_i(D_0)F_i(k_{i,1}^*).
\end{align}
\end{lemma}
\begin{proof}
Define $T_0^*\equiv \sum_{i=1}^m\tau_{i,0} w_{i,0}$. Then, take $T_1^*$  and  an equilibrium $\Big((c_{i,0}^*,c_{i,1}^*,k_{i,1}^*, b_{i,0}^*)_{i=1}^m, R_1^*\Big)$ associated to the tax $T_1^*$,  as in the statement of Lemma \ref{selectionlemma0}. Then, we have a political-economic equilibrium. 
\end{proof}

It means that $T_1^*$ in Lemma \ref{selectionlemma0} is a fixed point. To prove the existence of such a fixed point, we require Assumption \ref{assumption-additional}  and make use of the Brouwer fixed point theorem. 
\begin{proposition}\label{prop22}Let Assumptions \ref{assum_Ai}, \ref{assum-agenti}, and \ref{assumption-additional}  be satisfied.   Then, there exists a political-economic equilibrium with transfers and innovation.
\end{proposition}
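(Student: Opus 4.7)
The plan is to invoke Proposition \ref{prop12} together with the Brouwer fixed point theorem applied to the map $\Gamma$ from Definition \ref{functiongamma2}. The idea is simple: a political-economic equilibrium is exactly a fixed point of $\Gamma$ on $[0,\bar W]^2$, because at such a fixed point $(T_0^*,T_1^*)$ the tax revenues generated by the competitive equilibrium associated with $(T_0^*,T_1^*,D_0^*)$, where $D_0^*=\gamma_{d,0}T_0^*$, coincide with $(T_0^*,T_1^*)$, so both government budget constraints hold and all conditions of Definition \ref{def-equilibrium-2} are fulfilled. Thus the task reduces to verifying that $\Gamma$ satisfies the hypotheses of Brouwer's theorem.

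First I would check that $\Gamma$ is well defined and maps $[0,\bar W]^2$ into itself. Given $(T_0,T_1)\in[0,\bar W]^2$, set $D_0=\gamma_{d,0}T_0\le T_0\le \bar W$, so Proposition \ref{prop12} applies and yields a unique competitive equilibrium, making $\Gamma_1,\Gamma_2$ unambiguously defined. The component $\Gamma_1(T_0,T_1)=\sum_i\tau_{i,0}w_{i,0}$ is a constant in $[0,W_0]\subset[0,\bar W]$. For $\Gamma_2$, using the market clearing condition at date $0$, each equilibrium capital satisfies $k_{i,1}\le W_0$; combined with Assumption \ref{assum_Ai} (monotonicity of $\mathcal{A}_i$) and monotonicity of $F_i$, this gives
\begin{align*}
\Gamma_2(T_0,T_1)=\sum_{i=1}^m\tau_{i,1}\mathcal{A}_i(D_0)F_i(k_{i,1})\le \sum_{i=1}^m\mathcal{A}_i(W_0)F_i(W_0)=\bar W_1\le \bar W,
\end{align*}
exactly as observed in Remark \ref{remark2}. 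Hence $\Gamma([0,\bar W]^2)\subseteq [0,\bar W]^2$, and the domain is a non-empty compact convex subset of $\rr^2$.

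Next I would establish continuity of $\Gamma$. Continuity of $\Gamma_1$ is trivial since it is a constant. For $\Gamma_2$, the map $(T_0,T_1)\mapsto (T_0,T_1,\gamma_{d,0}T_0)$ is continuous, and by Proposition \ref{prop12} the competitive equilibrium---in particular the investment profile $(k_{i,1})_{i=1}^m$---is continuous in $(T_0,T_1,D_0)$. Composing with the continuous functions $\mathcal{A}_i$ and $F_i$ (Assumptions \ref{assum_Ai} and \ref{assum-agenti}) yields continuity of $\Gamma_2$.

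With $\Gamma:[0,\bar W]^2\to[0,\bar W]^2$ continuous on a compact convex set, Brouwer's fixed point theorem produces $(T_0^*,T_1^*)$ such that $\Gamma(T_0^*,T_1^*)=(T_0^*,T_1^*)$. Writing $D_0^*=\gamma_{d,0}T_0^*$ and letting $\bigl((c_{i,0}^*,c_{i,1}^*,k_{i,1}^*,b_{i,0}^*)_{i=1}^m,R_1^*\bigr)$ be the associated competitive equilibrium from Proposition \ref{prop12}, the fixed point property gives $T_0^*=\sum_i\tau_{i,0}w_{i,0}$ and $T_1^*=\sum_i\tau_{i,1}\mathcal{A}_i(D_0^*)F_i(k_{i,1}^*)$, i.e.\ both government budget identities in Definition \ref{def-equilibrium-2}. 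The individual optimality and market-clearing conditions already hold by construction of the competitive equilibrium, so the full list $\bigl((c_{i,0}^*,c_{i,1}^*,k_{i,1}^*,b_{i,0}^*)_{i=1}^m,R_1^*,T_0^*,T_1^*,D_0^*\bigr)$ is a political-economic equilibrium. The main (and essentially only) obstacle is the continuity of $\Gamma$, but this is entirely absorbed into Proposition \ref{prop12}; once that proposition is in hand, the present proof is a direct Brouwer argument.
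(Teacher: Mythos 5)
Your proposal is correct and follows essentially the same route as the paper: apply Proposition \ref{prop12} to get a well-defined, continuous map $\Gamma$ on $[0,\bar W]^2$, use Remark \ref{remark2} to verify the self-mapping property, and invoke Brouwer's fixed point theorem to obtain $(T_0^*,T_1^*)$ whose associated competitive equilibrium satisfies Definition \ref{def-equilibrium-2}. Your write-up is in fact slightly more explicit than the paper's (e.g., the bound $k_{i,1}\le W_0$ and the composition argument for continuity), but the underlying argument is identical.
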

\begin{proof}
For each $(T_0,T_1)\in [0,\bar{W}]^2$ and $D_0=\gamma_{d,0}T_0$ where $\gamma_d$ is a parameter in $[0,1]$, there exists a unique  competitive equilibrium (as in Definition \ref{def-equilibrium-0}) whose outcomes are continuous in $(T_0,T_1)$. Denote the capital allocation by $k_{i,1}(T_0,T_1)$ for $i=1,\ldots,m$. 

Then, we define the map $\Gamma: [0,\bar{W}]^2\to [0,\bar{W}]^2$ by the following:
\begin{align}
\Gamma_1(T_0,T_1)&=\sum_{i=1}^m\tau_{i,0} w_{i,0}, \quad 
\Gamma_2(T_0,T_1)=\sum_{i=1}^m\tau_{i,1}\mathcal{A}_i(D_0)F_i(k_{i,1}(T_0,T_1))
\end{align}

The function $\Gamma_1$ is well-defined and constant because both lists $(\tau_{i,0})$ and $(w_{i,0})$ are exogenous (recall that Definition \ref{def-equilibrium-2}'s point \ref{defpoint2} requires that $T_0=\sum_{i=1}^m\tau_{i,0} w_{i,0}$). Thanks to Assumption \ref{assumption-additional} and Proposition  \ref{prop12}, there exists a unique competitive equilibrium. By consequence,  the function $\Gamma_2$ is well-defined.

 By Proposition  \ref{prop12}, the map $\Gamma$ is continuous. Moreover, by Remark \ref{remark2}, we have $\Gamma_1(T_0,T_1)\in [0,\bar{W}]$ and $\Gamma_2(T_0,T_1)\in [0,\bar{W}]$.  Applying the Brouwer fixed point theorem, there exists $(T_0^*,T_1^*)\in [0,\bar{W}]^2$ satisfying 
\begin{align*}
T_0^*&=\Gamma_1(T_0^*,T_1^*)=\sum_{i=1}^m\tau_{i,0} w_{i,0}, \quad 
T_1^*=\Gamma_2(T_0^*,T_1^*)=\sum_{i=1}^m\tau_{i,1}\mathcal{A}_i(D_0)F_i(k_{i,1}(T_0^*,T_1^*)).
\end{align*}
It means that this couple $(T_0^*,T_1^*)$ together with its associated competitive equilibrium (as in Definition \ref{def-equilibrium-0}) constitutes a political-economic equilibrium following Definition \ref{def-equilibrium-2}. We have finished our proof.
\end{proof}

\paragraph{Discussion of Assumption \ref{assumption-additional}.}

We require Assumption \ref{assumption-additional}  to ensure that the map $\Gamma$ is well defined and continuous, which allows us to make use of the Brouwer fixed-point theorem. Without Assumption  \ref{assumption-additional}, there may exist multiple competitive equilibria.\footnote{See Remark \ref{continuum} in Appendix \ref{appendix}.} In this case, we can define a correspondence  $\Gamma: [0,\bar{W}]^2	\rightrightarrows [0,\bar{W}]^2$ by the following.
\begin{subequations}
\begin{align}
\Gamma_1(T_0,T_1)&=\{T_0^*\}, \text{ where }T_0^*\equiv \sum_{i=1}^m\tau_{i,0} w_{i,0}\\ 
\Gamma_2(T_0,T_1)&=\Big\{\sum_{i=1}^m\tau_{i,1}\mathcal{A}_i(D_0)F_i(k_{i,1}(T_0,T_1)): \notag\\
&\quad \quad \quad (k_{i,1}(T_0,T_1))_{i=1}^m \text{ is an equilibrium allocation of capital}\Big\}.
\end{align}
\end{subequations}
There exists a political-economic equilibrium if and only if there exists a so-called fixed point $T^*_1$ such that $T^*_1 \in \Gamma_2(T_0^*,T_1^*)$. This is satisfied if there exists a continuous selector of the correspondence $\Gamma$. So, it remains to prove the existence of a continuous selector. We may do so by applying the Michael selection theorem \citep{Michael1956}\footnote{See also \cite{ab06infinite}, Section 17.11.} for the correspondence $\Gamma$, but additional assumptions are needed because the Michael selection theorem requires that the correspondence is lower hemicontinuous and has nonempty closed convex values. When there exist multiple competitive equilibria,  the correspondence $\Gamma$ may not have convex values.

 Theorem 1 in \cite{mitra98} and  Theorem 1 in \cite{levan02} establish the existence of an equilibrium with externality. However, since \cite{mitra98}, 
\cite{levan02} consider models with one representative consumer, they do not need the assumption $u_i^{\prime}(c)+cu_i^{\prime \prime}(c)\geq 0$ $\forall c>0$.

\section{Analytical results}
\label{analyses}

For the sake of tractability, we consider a simple model with two groups of agents ($m = 2$) to obtain explicit analytical results.

\begin{assum}\label{tractable1}Assume that there are two groups $i=1,2$ with logarithmic utility $u_i(c)=ln(c)$, agents in each group are identical so that we have two representative agents, and \begin{align}
F_i(k)&=k, \quad A_i(D)=A_i(1+a_iD) \text{ for any } k, D.
\end{align}
where $A_i$ is the autonomous productivity and $a_i$ represents the effect of the R\&D investment on agent i's productivity.
\end{assum}


\subsection{Equilibrium outcomes with and without intervention}
First, we compute the equilibrium in the absence of intervention.
\begin{lemma}\label{without-intervention}Let Assumption \ref{tractable1} be satisfied. Assume that two productivities are different, i.e., $A_2\not=A_1$. Focus on the case without interventions: $\tau_{i,t}=\gamma_{i,t}=0$ for any $i,t$. There exists a unique equilibrium. In equilibrium, we have
\begin{subequations}\label{outcome0}
\begin{align}\label{K10}K_1^*&=k_{1,1}+k_{2,1}=\frac{\beta_2}{1+\beta_2}w_{2,0}+\frac{\beta_1}{1+\beta_1}w_{1,0}\\
\label{Y10}Y_1^*&=\Big(\frac{\beta_2}{1+\beta_2}w_{2,0}+\frac{\beta_1}{1+\beta_1}w_{1,0}\Big)\max(A_1,A_2).
\end{align}\end{subequations}
\end{lemma}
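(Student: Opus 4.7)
The plan is to exploit the fact that with $F_i(k)=k$ the marginal product of capital for agent $i$ is simply the constant $A_i$, so the capital/bond arbitrage condition becomes very rigid. The existence and uniqueness part will follow from Proposition \ref{prop12} once I check Assumption \ref{assumption-additional} is compatible here (strict concavity of $F_i$ fails, but uniqueness can still be verified directly from the arguments below); what really needs computation is the closed form.

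First I will write the agent's problem without intervention, observe that the two date-$1$ linear returns are $A_i$ on capital and $R_1$ on lending. A no-arbitrage argument shows that at any equilibrium one must have $R_1=\max(A_1,A_2)$: if $R_1<\max_i A_i$ the more productive agent would demand infinite capital (utility is increasing), and if $R_1>\max_i A_i$ then $k_{i,1}=0$ for both agents, forcing $b_{1,0}+b_{2,0}=0$ with both agents wanting to lend, contradicting market clearing. WLOG assume $A_2>A_1$; then $R_1=A_2$, and for agent $1$ optimality forces $k_{1,1}=0$ (since $A_1<R_1$ makes lending strictly dominate own production).

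Next I reduce each agent's problem to a standard one-period log-utility consumption/savings problem. For agent $1$, with $k_{1,1}=0$ and letting $s_1=-b_{1,0}\geq 0$, the budget becomes $c_{1,0}=w_{1,0}-s_1$, $c_{1,1}=R_1 s_1$, and maximizing $\ln c_{1,0}+\beta_1\ln c_{1,1}$ gives the textbook rule $s_1=\frac{\beta_1}{1+\beta_1}w_{1,0}$. For agent $2$, since $R_1=A_2$, capital and bonds are perfect substitutes, so the intertemporal budget collapses to $c_{2,0}+c_{2,1}/A_2=w_{2,0}$, yielding $c_{2,0}=\frac{w_{2,0}}{1+\beta_2}$ and $c_{2,1}=\frac{\beta_2 A_2 w_{2,0}}{1+\beta_2}$.

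Finally, bond market clearing $b_{1,0}+b_{2,0}=0$ fixes $b_{2,0}=s_1=\frac{\beta_1}{1+\beta_1}w_{1,0}$, and then agent $2$'s date-$0$ budget $k_{2,1}=w_{2,0}+b_{2,0}-c_{2,0}$ gives $k_{2,1}=\frac{\beta_2}{1+\beta_2}w_{2,0}+\frac{\beta_1}{1+\beta_1}w_{1,0}$. Since $k_{1,1}=0$, summing yields (\ref{K10}); multiplying by $A_2=\max(A_1,A_2)$ (and noting $A_1 k_{1,1}=0$) gives (\ref{Y10}). The main technical subtlety, and the only place careful argument is needed, is justifying $k_{1,1}=0$ and $R_1=\max(A_1,A_2)$; everything else is a routine log-utility computation.
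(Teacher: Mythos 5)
Your proof is correct and takes essentially the same route as the paper's: identify $R_1=\max(A_1,A_2)$ by no-arbitrage, conclude the less productive agent sets $k_{1,1}=0$ and only lends, solve the standard log-utility saving problems, and close with bond-market clearing. You actually supply more detail than the paper does (the explicit ruling out of $R_1\neq\max_i A_i$ and the remark that the strict concavity in Assumption \ref{assumption-additional} fails here so uniqueness must be checked directly), which is a welcome tightening rather than a deviation.
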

\begin{proof}
See Appendix \ref{appendix}.
\end{proof}
In the absence of government intervention, the economic outcomes (aggregate capital $K_1$ and aggregate GDP $Y_1$) at date $1$ are defined by the initial endowment $\omega_{i,0}$, saving/investment preferences $\beta_i$ (i.e., rate of time discount), and autonomous productivities $A_i$, $i = 1,2$


In the presence of intervention, we have the following result.

\begin{proposition}\label{linear2-result}
Let Assumption \ref{tractable1} be satisfied. Assume that the productivity of agent 2 at date $1$ after government intervention  is higher than that of agent 1, i.e., 
\begin{align}\label{linear2assumption0}
(1-\tau_{2,1})(1+a_2\gamma_{d,0}T_0)A_2>(1-\tau_{1,1})(1+a_1\gamma_{d,0}T_0)A_1,
\end{align}
where $T_0=\tau_{1,0} w_{1,0}+\tau_{2,0} w_{2,0}.$  Then, there exists a unique equilibrium.\footnote{When $(1-\tau_{2,1})(1+a_2\gamma_{d,0}T_0)A_2<(1-\tau_{1,1})(1+a_1\gamma_{d,0}T_0)A_1$, we have $k_{1,1}>0$ and $k_{2,1}=0$ in equilibrium. When $(1-\tau_{2,1})(1+a_2\gamma_{d,0}T_0)A_2=(1-\tau_{1,1})(1+a_1\gamma_{d,0}T_0)A_1$, there may exist a continuum of equilibria (see Remark \ref{continuum} in Appendix  \ref{appendix}).} In equilibrium, we have $k_{1,1}=0$ and $k_{2,1}>0$.
\begin{enumerate}
\item The aggregate capital is equal to
\begin{align}\label{K1}
K_1&=k_{2,1}=\dfrac{\frac{\beta_2\big[(1-\tau_{2,0})w_{2,0}+\gamma_{2,0}(\tau_{1,0} w_{1,0}+\tau_{2,0} w_{2,0})\big]}{(1+\beta_2)}+\frac{\beta_1\big[(1-\tau_{1,0})w_{1,0}+\gamma_{1,0}(\tau_{1,0} w_{1,0}+\tau_{2,0} w_{2,0})\big]}{(1+\beta_1)}}{1+(\frac{\gamma_{2,1}}{1+\beta_2}+\frac{\gamma_{1,1}}{1+\beta_1})\frac{\tau_{2,1}}{(1-\tau_{2,1})}}
\end{align} 
$K_1$ is increasing in $\gamma_{1,0}$ and $\gamma_{2,0}$ but decreasing in $\gamma_{1,1}$ and $\gamma_{2,1}$.

\item The GDP of the economy at date $1$ equals 
\begin{align}
Y_1&=\big(1-\tau_{2,1}\gamma_{c,1}\big)\big(1+a_2(1-\gamma_{1,0}-\gamma_{2,0}-\gamma_{c,0})T_0\big)A_2k_{2,1}.
\end{align} 
\end{enumerate}

\end{proposition}
\begin{proof}
See Appendix \ref{appendix}.
\end{proof}

Condition (\ref{linear2assumption0}) means that the after-tax productivity of agent $2$ is higher than that of agent 1. Since we are considering linear production functions, condition (\ref{linear2assumption0}) guarantees that $k_{1,1}=0$ and $k_{2,1}>0$, i.e., the total capital $K_1$ is used by the agent $2$ (the most productive agent) and the agent $1$ does not produce.

Some remarks deserve to be mentioned.
\begin{enumerate}
\item {\bf R\&D efficiency}. GDP at date $1$, $Y_1$, is increasing in the efficiency $a_2$ of R\&D. The higher the R\&D process's efficiency, the higher GDP.
\item {\bf Effect of corruption}. GDP at date $1$ is naturally decreasing with the fraction of resource lost  $\gamma_{c,1}$ (i.e., $1-\gamma_{1,1}-\gamma_{2,1}$) and $\gamma_{c,0} $ (recall that $\gamma_{c,0} =1-\gamma_{1,0}-\gamma_{2,0} - \gamma_{d,0}$). It means that the corruption is always harmful for the economic development.
\end{enumerate}



\subsection{Effects of government interventions}
To investigate the effects of government interventions, we compare these outcomes with those in the case without interventions and provide the comparative statics regarding the role of distribution policy $(\tau_{i,t},\gamma_{i,t})_{i=1}^2$ and government effort $\gamma_{d,0}$ in R\&D.

\begin{proposition}\label{corollary1}
Let Assumptions in Proposition \ref{linear2-result} hold. In equilibrium, the gap between the GDP of the economy with interventions and without interventions is 
\begin{align}Y_1-Y_1^*
=&(1-\tau_{2,1}\gamma_{c,1})(1+a_2\gamma_{d,0}T_0)A_2
\dfrac{\frac{\beta_2\big[(1-\tau_{2,0})w_{2,0}+\gamma_{2,0}T_0\big]}{(1+\beta_2)}+\frac{\beta_1\big[(1-\tau_{1,0})w_{1,0}+\gamma_{1,0}T_0\big]}{(1+\beta_1)}}{1+(\frac{\gamma_{2,1}}{1+\beta_2}+\frac{\gamma_{1,1}}{1+\beta_1})\frac{\tau_{2,1}}{(1-\tau_{2,1})}} \notag\\
\label{outputdifference}&\quad -\Big(\frac{\beta_2}{1+\beta_2}w_{2,0}+\frac{\beta_1}{1+\beta_1}w_{1,0}\Big)\max(A_1,A_2),
\end{align}
where $T_0=\tau_{1,0} w_{1,0}+\tau_{2,0} w_{2,0}$ and $\gamma_{c,1}\equiv 1-\gamma_{1,1}-\gamma_{2,1}.$

The capital gap is \begin{align*}
K_1-K_1^*&=\dfrac{\frac{\beta_2\big[(1-\tau_{2,0})w_{2,0}+\gamma_{2,0}(\tau_{1,0} w_{1,0}+\tau_{2,0} w_{2,0})\big]}{(1+\beta_2)}+\frac{\beta_1\big[(1-\tau_{1,0})w_{1,0}+\gamma_{1,0}(\tau_{1,0} w_{1,0}+\tau_{2,0} w_{2,0})\big]}{(1+\beta_1)}}{1+(\frac{\gamma_{2,1}}{1+\beta_2}+\frac{\gamma_{1,1}}{1+\beta_1})\frac{\tau_{2,1}}{(1-\tau_{2,1})}}
\\
& - \Big(\frac{\beta_2}{1+\beta_2}w_{2,0}+\frac{\beta_1}{1+\beta_1}w_{1,0}\Big).
\end{align*} 

\end{proposition}

Observe that $Y_1-Y_1^*$ is continuously increasing in the R\&D efficiency $a_2$. Moreover, if $\gamma_{d,0}T_0>0$, we have $\lim_{a_2\to\infty}(Y_1-Y_1^*)=\infty$. By consequence, we obtain the following result.
\begin{proposition}[inaction versus intervention and corruption: role of efficiency]\label{inaction} Let Assumptions in Proposition \ref{linear2-result} hold. 
Assume that the government effort in R\&D is significant, i.e., $\gamma_{d,0}T_0>0$. 
\begin{enumerate}
\item \label{proposition5_2} 
When the efficiency $a_2$ is high enough, we have $Y_1-Y_1^*>0$. This happens even if there is a corruption, i.e.,  $1>\gamma_{1,0}+\gamma_{2,0}+\gamma_{d,0}$ and $ 1>\gamma_{1,1}+\gamma_{2,1}$. 

\item \label{proposition5_2} When the efficiency  $a_2$ is low and/or the corruption is high (i.e., $1-\gamma_{1,1}-\gamma_{2,1}$ is high), then $Y_1-Y_1^*<0$.
\end{enumerate}
\end{proposition}

As mentioned above, the presence of corruption is always harmful to economic development. However,  Proposition \ref{inaction} argues that the inaction may be worse. Indeed, Proposition \ref{inaction}'s point 1 shows that when the government provides a significant investment in public investment and it is quite efficient (i.e. high value of $a_2$), we obtain more economic output with respect to the inaction situation, even if a part of tax revenue is wasted due to corruption.

Proposition \ref{inaction} is related to \cite{dl02}'s Proposition 3 where they show, in an optimal growth framework, that a scenario, where embezzling improves the productivity,  corruption is not very high and the incentive effect is important, would be better than the inaction scenario (where the productivity remains the same and there is no corruption).

\begin{Simulation}\label{simulation1}{\normalfont We illustrate the above theoretical findings by a numerical simulation. We set the following baseline parameters:

\begin{center}    
\begin{tabular}{|l|l|c|l|}
\hline
\textbf{Description} & \textbf{Parameter}   \\
\hline
 Initial endowments & $w_{1,0}=10, w_{2,0}=1$ \\
 Autonomous productivity & $A_1=1, A_2=1.1$   \\
Time discount rate & $\beta_1=\beta_2=0.6$ \\
Tax rates at periods $0$ and $1$ & $\tau_{1,0}=\tau_{2,0}=0.1$ \text{ and } $\tau_{1,1}=\tau_{2,1}=0.1$   \\
 Transfers at periods $t=0$ and  $t=1$  & $\gamma_{1,0}= \gamma_{2,0}=0.41 $ and $\gamma_{1,1}=\gamma_{2,1}=0.47$ \\
  R\&D efficiency & $ a_1=0.85, a_2=0.85$  \\
Public investment share (R\&D) & $\gamma_{d,0}=0.1$  \\
Corruption rate at date $0$  & $\gamma_{c,0}=1-\gamma_{d,0}-\sum_i\gamma_{i,0}$ = 0.08 \\
Corruption rate  at date $1$  & $\gamma_{c,1}=1-\sum_i\gamma_{i,1}$ =0.06. \\
\hline
\end{tabular}
\end{center}

We take $\beta_1=\beta_2=0.6$ so that the saving rate in our two-period model is $\frac{\beta_1}{1+\beta_1}=37.5\%$. According to the database of the World Bank,\footnote{See https://data.worldbank.org/.} the gross savings (\% of GDP) of  China and Viet Nam in 2023 is 42\% and 35\% respectively.

We choose the tax rates $\tau_{1,0}=\tau_{2,0}=\tau_{1,1}=\tau_{2,1}=10\%$ so that the tax revenue to the GDP equals $\frac{T_0}{w_{1,0}+w_{2,0}}=10\%$. According to the database of the World Bank, the tax revenue (\% of GDP) of Brazil, China, France in 2023 is 14\%, 7.6\%, 23.1\%, respectively.

We choose $w_{1,0}=10, w_{2,0}=1$ to capture the fact that the size of the type-1 agents is 10 times bigger than that of the type-2 agents. Conditions $A_2=1.1>A_1=1$ means that the type-2 agent is more productive than the type-1 agent.

Our choice $\gamma_{d,0}=0.1$ means that public investment in R\&D represents $10\%$ of total tax revenue. 

With these specifications, the condition 
(\ref{linear2assumption0}) in Proposition \ref{linear2-result} holds (indicating that the productivity of agent 2 after government intervention is higher than that of agent 1). According to (\ref{outputdifference}) in Proposition \ref{corollary1},  we find that
\begin{align}
Y_1-Y_1^* \approx 0.01 >0 \text{ and } \frac{Y_1-Y_1^*}{Y_1^*}\approx 0.2\%>0.
\end{align}
This means that the inaction is worse than the situation where public investment in R\&D is highly efficient (corresponding to a high $a_2$) despite there is corruption. Note that the corruption rate at date $1$ is $\gamma_{c,1}=0.06$.

We now set $\gamma_{1,1}=\gamma_{2,1}=0.42$, which implies that the corruption rate at date $1$ is $\gamma_{c,1}=1-\gamma_{1,1}-\gamma_{2,1}=0.16$, which is higher than the previous level $0.06$.  In this case, we see that $\frac{Y_1-Y_1^*}{Y_1^*}\approx -0.16\%<0$. This means that the output decreases if the corruption level is high (this point is consistent with Proposition \ref{inaction}'s point \ref{proposition5_2}).
}
\end{Simulation}

We now focus on the case without public investment in R\&D, i.e., $\gamma_{d,0}=0$. Could the output in the economy with intervention and corruption be higher than the output in the absence of intervention? We argue this may be the case. The following result focuses on the role of the time discount rate.

\begin{proposition}[inaction versus intervention and corruption: role of saving rate]\label{inaction2} Let Assumption \ref{tractable1} be satisfied. Assume also that $\gamma_{d,0}=0$, $A_2>A_1$, and there is no  intervention at date $1$, i.e., $\tau_{1,1}=\tau_{2,1}=0$.  
The output gap equals 
\begin{align}Y_1-Y_1^*
=&A_2\frac{\beta_2}{1+\beta_2}\Big(\gamma_{2,0}\tau_{1,0} w_{1,0}-(1-\gamma_{2,0})\tau_{2,0}w_{2,0}\Big)\notag\\
&+A_2\frac{\beta_1}{1+\beta_1}\Big(\gamma_{1,0}\tau_{2,0} w_{2,0}-(1-\gamma_{1,0})\tau_{1,0} w_{1,0}\Big).
\end{align}
\begin{enumerate}
\item 
If $\beta_1=\beta_2=\beta$, then we can compute that
\begin{align}Y_1-Y_1^*
=&\frac{\beta_1}{1+\beta}A_2(1-\gamma_{1,0}-\gamma_{2,0})T_0=-\frac{\beta_1}{1+\beta}A_2\gamma_{c,0}T_0\leq 0
\end{align}
This is strictly negative if and only if the wasted rate $\gamma_{c,0}>0$. 

\item Assume that $\gamma_{2,0}\tau_{1,0} w_{1,0}-(1-\gamma_{2,0})\tau_{2,0}w_{2,0}>0$. We have $Y_1-Y_1^*>0$  
if $\beta_2$ is high enough. 

\item Assume that $\gamma_{1,0}\tau_{2,0} w_{2,0}-(1-\gamma_{1,0})\tau_{1,0} w_{1,0}<0$. We have $Y_1-Y_1^*<0$ 
if $\beta_1$ is high enough.
\end{enumerate}
\end{proposition}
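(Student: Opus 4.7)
The plan is to specialise Proposition \ref{linear2-result} to the present hypotheses and then read off the three consequences directly.

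First I would check that the main case condition of Proposition \ref{linear2-result} is automatically satisfied: with $\tau_{1,1}=\tau_{2,1}=0$ and $\gamma_{d,0}=0$, the inequality $(1-\tau_{2,1})(1+a_2\gamma_{d,0}T_0)A_2>(1-\tau_{1,1})(1+a_1\gamma_{d,0}T_0)A_1$ collapses to $A_2>A_1$, which is assumed. Hence Proposition \ref{linear2-result} applies, giving $k_{1,1}=0$ and $K_1=k_{2,1}$. Substituting $\tau_{i,1}=0$ in the denominator of \eqref{K1} eliminates the correction term, and substituting $\gamma_{d,0}=0$ in the productivity formula gives $\mathcal{A}_2(D_0)=A_2$. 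Therefore
\begin{align*}
Y_1 \;=\; A_2\,K_1 \;=\; A_2\left[\frac{\beta_2\bigl((1-\tau_{2,0})w_{2,0}+\gamma_{2,0}T_0\bigr)}{1+\beta_2}+\frac{\beta_1\bigl((1-\tau_{1,0})w_{1,0}+\gamma_{1,0}T_0\bigr)}{1+\beta_1}\right].
\end{align*}

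Next, since $A_2>A_1$, Lemma \ref{without-intervention} gives $Y_1^*=\bigl(\tfrac{\beta_2}{1+\beta_2}w_{2,0}+\tfrac{\beta_1}{1+\beta_1}w_{1,0}\bigr)A_2$. Subtracting and collecting the $\beta_1$ and $\beta_2$ blocks, the $w_{i,0}$ terms and the $-\tau_{i,0}w_{i,0}$ terms recombine: inside the $\beta_2$-block one gets $\gamma_{2,0}T_0-\tau_{2,0}w_{2,0}$, and expanding $T_0=\tau_{1,0}w_{1,0}+\tau_{2,0}w_{2,0}$ rewrites this as $\gamma_{2,0}\tau_{1,0}w_{1,0}-(1-\gamma_{2,0})\tau_{2,0}w_{2,0}$; the $\beta_1$-block is handled symmetrically. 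This yields exactly the displayed formula for $Y_1-Y_1^*$.

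For point 1, I would set $\beta_1=\beta_2=\beta$, factor out $\tfrac{\beta}{1+\beta}A_2$, and collect the coefficients of $\tau_{1,0}w_{1,0}$ and $\tau_{2,0}w_{2,0}$; both coefficients equal $\gamma_{1,0}+\gamma_{2,0}-1$, so the whole expression becomes $\tfrac{\beta}{1+\beta}A_2(\gamma_{1,0}+\gamma_{2,0}-1)T_0$. Since $\gamma_{d,0}=0$, the definition of corruption gives $\gamma_{c,0}=1-\gamma_{1,0}-\gamma_{2,0}$, yielding the stated $-\tfrac{\beta}{1+\beta}A_2\gamma_{c,0}T_0\le 0$, with strict inequality iff $\gamma_{c,0}>0$. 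For points 2 and 3, I would observe that the expression for $Y_1-Y_1^*$ is a sum of two terms, one proportional to $\tfrac{\beta_2}{1+\beta_2}$ with bracket $\gamma_{2,0}\tau_{1,0}w_{1,0}-(1-\gamma_{2,0})\tau_{2,0}w_{2,0}$, the other proportional to $\tfrac{\beta_1}{1+\beta_1}$ with bracket $\gamma_{1,0}\tau_{2,0}w_{2,0}-(1-\gamma_{1,0})\tau_{1,0}w_{1,0}$. As $\beta_1\downarrow 0$ (resp.\ $\beta_2\downarrow 0$) the second (resp.\ first) term tends to zero, so by continuity the sign of the sum is governed by the remaining term, whose sign is exactly the hypothesis of point 2 (resp.\ point 3).

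No step presents a genuine obstacle: the proof is a clean bookkeeping exercise building on Proposition \ref{linear2-result} and Lemma \ref{without-intervention}. The only place where care is needed is the algebraic substitution of $T_0=\tau_{1,0}w_{1,0}+\tau_{2,0}w_{2,0}$ to rewrite $\gamma_{i,0}T_0-\tau_{i,0}w_{i,0}$ in the asymmetric cross-form appearing in the statement, and the sign check in point 1 where one must recognise $\gamma_{1,0}+\gamma_{2,0}-1=-\gamma_{c,0}$ under $\gamma_{d,0}=0$.
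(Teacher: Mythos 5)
Your proposal is correct and follows essentially the same route as the paper: the paper invokes Proposition \ref{corollary1} (which is just the difference of Proposition \ref{linear2-result} and Lemma \ref{without-intervention}) and then performs the same rewriting of $\gamma_{i,0}T_0-\tau_{i,0}w_{i,0}$ via $T_0=\tau_{1,0}w_{1,0}+\tau_{2,0}w_{2,0}$, while you assemble the two ingredients directly. Your computation for point 1 also silently fixes a sign slip in the stated intermediate expression (it should read $(\gamma_{1,0}+\gamma_{2,0}-1)T_0$, consistent with $-\gamma_{c,0}T_0$), and your limit argument for points 2 and 3 is the intended one.
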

\begin{proof}See Appendix \ref{appendix}.\end{proof}
When both agents have the same rate of time discount, the presence of corruption lowers the aggregate output. 

We now explain the intuition behind point 2 of Proposition \ref{inaction2}. Condition $\gamma_{2,0}\tau_{1,0} w_{1,0}-(1-\gamma_{2,0})\tau_{2,0}w_{2,0}>0$ means that the government's redistribution policy does increase the after-transfer income of agent $2$ (the most productive agent), corresponding to $(1-\tau_{2,0})w_{2,0}+\gamma_{2,0}T_0>w_{2,0}$ as shown in Appendix \ref{appendix}. By combining with the fact that the most productive agent has a higher rate of time discount and a higher saving rate, the aggregate capital would be higher than that in the absence of intervention, even if there is some corruption $\gamma_{c,0}>0$. So, the economy's output would be higher.

Point 3 of Proposition \ref{inaction2} shows another story. Condition $\gamma_{1,0}\tau_{2,0} w_{2,0}-(1-\gamma_{1,0})\tau_{1,0} w_{1,0}<0$ corresponds to $(1-\tau_{1,0})w_{1,0}+\gamma_{1,0}T_0 <w_{1,0}$ as shown in Appendix \ref{appendix}: a distorted redistribution may be harmful for economic growth, even if there is no corruption (i.e., $\gamma_{1,0}+\gamma_{2,0} =1$). This happens if the agent 1 (whose rate of time discount $\beta_1$ and saving rate are high) is taxed too much, because this reduces the savings of this agent, which in turn decreases the aggregate investment. 


We now emphasize another bad intervention.
\begin{proposition}[harmful taxation]\label{bad3}Let Assumption \ref{tractable1} be satisfied. Assume that $\beta_1=\beta_2=\beta>0$ and we abstract from the public investment (i.e., $\gamma_{d,0}=0$). 

Assume that $A_1>A_2>\frac{1-\tau_{1,1}}{1-\tau_{2,1}}A_1>0$. Then, we have $Y_1<Y_1^*$ in equilibrium.
\end{proposition}
\begin{proof}See Appendix \ref{appendix}.\end{proof}

In this economy, at the beginning, the productivity of agent $1$ is higher than that of agent $2$. However, when the government sets a high tax rate $\tau_{1,1}$ on the most productive agent (agent $1$ in this case) in the sense that $A_2>\frac{1-\tau_{1,1}}{1-\tau_{2,1}}A_1>0$, the after-tax productivity of agent $2$ is higher than that of agent $1$, i.e., $(1-\tau_{2,1})A_2>(1-\tau_{1,1})A_1$. This implies that agent $2$ produces while agent $1$ does not in equilibrium. In this case, the new equilibrium output $Y_1$ is lower than $Y^*_1$, which is the output in the case without interventions (see Lemma \ref{without-intervention}). It should be noticed that this happens whatever there is corruption (i.e., $\gamma_{1,t}+\gamma_{2,t}=1$ for $t=0,1$) or not.

\begin{Simulation}\label{simulation2}{\normalfont We illustrate Proposition \ref{bad3} by considering an economy with the same parameter values as in Simulation \ref{simulation1}, except $A_1 = 1.3 > A_2=1.2$, $\tau_{1,1} = 0.2$ and $\gamma_{d,0} = 0$. Condition $A_1>A_2>\frac{1-\tau_{1,1}}{1-\tau_{2,1}}A_1>0$ is satisfied as $1.3>1.2>1.16\approx \frac{1-\tau_{1,1}}{1-\tau_{2,1}}A_1$. 
The gap between $Y_1$ and $Y_1^*$ becomes negative in this case ($\frac{Y_1-Y_1^*}{Y^*_1}=-15.7\%<0$).
}
\end{Simulation}
\section{Conclusion}
\label{conclusion}

We have developed a general equilibrium model that incorporates public investment and corruption. The existence of equilibrium is established using a two-step fixed-point argument. We have then analyzed the role of various redistribution and public investment policies, as well as the effect of corruption. Our findings indicate that corruption consistently is detrimental to economic development. However, policy inaction, defined as the absence of both redistribution and public investment, may lead to worse outcomes than scenarios in which corruption coexists with innovation-led policies.

The results demonstrate that under certain conditions, such as high efficiency of R\&D investment or well-targeted redistribution strategies, the gains from public intervention can compensate for the losses due to corruption. This confirms that governance quality and, particularly, the productivity of public resources always play an essential role in determining the effect of policy interventions. In particular, targeting transfers toward more productive agents or investing in high-impact R\&D sector can mitigate the negative effects of corruption and even improve aggregate output. Furthermore, our analysis reveals that distorted taxation policies, such as over-taxing highly productive or high-saving agents, can be particularly damaging, even in the absence of corruption. 

Overall, this study contributes to a better understanding of the conditions under which government intervention can be justified, even in environments where corruption is prevalent. These findings warn against adopting anti-corruption strategies that neglect the broader context of fiscal policy design. Instead, they call for an integrated approach that considers the interaction between corruption, public investment efficiency, redistribution, and taxation policies.

\appendix
\section{Formal proofs}\label{appendix}

\begin{proof}[{\bf Proof of Proposition \ref{prop12}}]

{\bf Step 1.}  By using the standard approach (see \cite{lvp16} for instance), we can prove that there exists a competitive equilibrium.\\
\newline
{\bf Step 2.}  We prove the following lemma.
\begin{lemma}
\label{implicit1}
For given $R_1>0,T_1\geq 0$, there exists a unique solution $(c_{i,0},c_{i,1}, k_{i,1}, b_{i,0})$ to the  maximization problem (\ref{agentiproblem}) of agent $i$. Moreover, $c_{i,0},c_{i,1}, k_{i,1}$ and $ b_{i,0}$ are continuously differentiable in $(R_1,T_1)$.   
\end{lemma}
\begin{proof}[Proof of Lemma \ref{implicit1}]
Given $T_1\geq 0$ and  $R_1>0$, consider the maximization problem (\ref{agentiproblem}) of agent $i$. It is clear that there exists a solution $(c_{i,0},c_{i,1}, k_{i,1}, b_{i,0})$. Since the utility function is strictly concave, there exists a unique consumption allocation $(c_{i,0},c_{i,1})$ of the agent $i$, and so does $k_{i,1}- b_{i,0} $. We claim that there exists a unique $b_{i,0}$. 

By Inada's condition $F_i'(0)=\infty$, we have  $k_{i,1}>0$. We can write standard first-order conditions (FOC):
\begin{align}
\label{foc1}
u_i'(c_{i,0})&=\beta_i (1-\tau_{i,1})A_iF_i'(k_{i,1})u_i'(c_{i,1})\\
\label{foc2}u_i'(c_{i,0})&=\beta_i R_1 u_i'(c_{i,1}).
\end{align}
So, $R_1=(1-\tau_{i,1})A_iF_i'(k_{i,1})$. When $F_i$ is strictly concave, this implies that $k_{i,1}$ is  continuously differentiable and decreasing in $R_1$.

Denote $x_{i,0}\equiv (1-\tau_{i,0})w_{i,0}+\gamma_{i,0}T_0$. The FOC (\ref{foc2}) gives
\begin{align}u_i'(x_{i,0}+ b_{i,0} -k_{i,1})&=\beta_i R_1u_i'\Big((1-\tau_{i,1})A_iF_i(k_{i,1})-R_1 b_{i,0} +\gamma_{i,1}T_1\Big). \label{eulerproof}
\end{align}
Since $u_i'$ is strictly decreasing ($u_i$ is strictly concave),  this equation has at most 1 solution. This implies that this equation has a unique solution $b_{i,0}$, which implies the uniqueness of $b_{i,0}$.

We have just proved the existence and uniqueness of the solution $(c_{i,0},c_{i,1}, k_{i,1}, b_{i,0})$. It is easy to see that this solution is continuous in $R_1$ and $T_1$.

Given $R_1$ and $T_1$, we define 
\begin{align}
E\equiv \{(k,b): k\geq 0, x_{i,0}+ b -k\geq 0,  (1-\tau_{i,1})A_iF_i(k)-R_1 b +\gamma_{i,1}T_1\geq 0\}.
\end{align}
and the map 
$\Phi: E\to \rr^2$ by 
\begin{align}
\Phi_1(k,b)&\equiv (1-\tau_{i,1})A_iF_i'(k)-R_1\\
\Phi_2(k,b)&\equiv u_i'(x_{i,0}+ b -k)-\beta_i R_1u_i'\Big((1-\tau_{i,1})A_iF_i(k)-R_1 b+\gamma_{i,1}T_1\Big). 
\end{align}
Of course, $\Phi$ is in $C^1$.

To apply the implicit function theorem, we compute
\begin{subequations}
\begin{align}
\frac{\partial \Phi_1}{\partial k}&=(1-\tau_{i,1})A_iF_i^{\prime \prime}(k),\quad  \frac{\partial \Phi_1}{\partial b}=0\\
\frac{\partial \Phi_2}{\partial k}&=-u_i^{\prime \prime}(x_{i,0}+ b -k)-(1-\tau_{i,1})A_iF_i'(k)\beta_i R_1u_i^{\prime \prime}\Big((1-\tau_{i,1})A_iF_i(k)-R_1 b+\gamma_{i,1}T_1\Big) \notag\\
\frac{\partial \Phi_2}{\partial b}&=u_i^{\prime \prime}(x_{i,0}+ b -k)+\beta_i R_1^2u_i^{\prime \prime}\Big((1-\tau_{i,1})A_iF_i(k)-R_1 b+\gamma_{i,1}T_1\Big).
\end{align}\end{subequations}

Since $F_i^{\prime \prime}<0$ and $u_i^{\prime \prime}<0$, the matrix $D\Phi(k,d)\equiv \Big(\frac{\partial \Phi}{\partial k},\frac{\partial \Phi}{\partial b}\Big)^T$ valued at the point $(k_{i,1},b_{i,0})$  is invertible. So, by the implicit function theorem, $k_{i,1}$ and $b_{i,0}$ are continuously differentiable in $(R_1,T_1)$. So are $c_{i,0}$ and $c_{i,1}$.
\end{proof}
{\bf Step 3.} Let us prove the uniqueness of equilibrium allocation and interest rate $R_1$. 

 Taking the derivative of both sides of (\ref{eulerproof}) with respect to $R_1$ we have 
\begin{align*}
u_i^{\prime \prime}(c_{i,0})\big( b_{i,0} '(R_1)-k_{i,1}'(R_1)\big)&=\beta_iu_i^{\prime}(c_{i,1})\\
&+\beta_1R_1u_i^{\prime \prime}(c_{i,0})\Big((1-\tau_{i,1})A_iF_i'(k_{i,1})k_{i,1}'(R_1)- b_{i,0} -R_1 b_{i,0} '(R_1)\Big).
\end{align*}
Combining with $(1-\tau_{i,1})A_iF_i'(k_{i,1})=R_1$, we get that
\begin{align}\label{additionalstep}
&u_i^{\prime \prime}(c_{i,0})\big( b_{i,0} '(R_1)-k_{i,1}'(R_1)\big)=\beta_iu_i^{\prime}(c_{i,1})+\beta_1R_1u_i^{\prime \prime}(c_{i,1})\Big(R_1k_{i,1}'(R_1)- b_{i,0} -R_1 b_{i,0} '(R_1)\Big).
\end{align}
This implies that
\begin{align*}
&\Big(u_i^{\prime \prime}(c_{i,0})+\beta_1R_1^2u_i^{\prime \prime}(c_{i,1})\Big)\big( b_{i,0} '(R_1)-k_{i,1}'(R_1)\big)=\beta_iu_i^{\prime}(c_{i,1})-\beta_iR_1 b_{i,0} u_i^{\prime \prime}(c_{i,1}).
\end{align*}
According to the budget constraint $-R_1 b_{i,0}=c_{i,1}-(1-\tau_{i,1})A_iF_i(k_{i,1})- \gamma_{i,1}T_1$, we have
\begin{align*}
&\beta_iu_i^{\prime}(c_{i,1})-\beta_iR_1 b_{i,0} u_i^{\prime \prime}(c_{i,1})\\
&=\beta_i\Big(u_i^{\prime}(c_{i,1})+c_{i,1}u_i^{\prime \prime}(c_{i,1})\Big)+\beta_iu_i^{\prime \prime}(c_{i,1})\Big(-(1-\tau_{i,1})A_iF_i(k_{i,1})-\gamma_{i,1}T_1\Big)>0
\end{align*}
because $u^{\prime}(c)+cu^{\prime \prime}(c)\geq 0$ for any $c>0$. It means that $\beta_iu_i^{\prime}(c_{i,1})-\beta_iR_1 b_{i,0} u_i^{\prime \prime}(c_{i,1})>0$. Combining with (\ref{additionalstep}), we have $ b_{i,0} '(R_1)-k_{i,1}'(R_1)<0$. Since $k_{i,1}'(R_1)<0$. We obtain $ b_{i,0} '(R_1)<0$.

So, $ b_{i,0} $ is strictly decreasing in $R_1$. Combining with the market clearing condition $\sum_{i=1}^m b_{i,0} =0$, we get that $R_1$ is uniquely determined. \\
\newline
{\bf Step 4.} Since  $R_t$ is uniquely determined by $\sum_{i=1}^m b_{i,0} =0$, where $b_{i,0}$ are continuously differentiable in $(R_1,T_0,T_1)$, the equilibrium interest rate $R_1$ is continuously differentiable in $(T_0,T_1)$.

\end{proof}

\begin{proof}[{\bf Proof of Lemma \ref{without-intervention}}]
 Agent $i$'s problem becomes
\begin{subequations}
\begin{align*}
&\max_{(c_{i,0},c_{i,1}, k_{i,1}, b_{i,0} )}u_i(c_{i,0})+\beta_iu(c_{i,1})\\
\text{subject to constraints: } &c_{i,0}+k_{i,1}\leq w_{i,0}+ b_{i,0} \\
&c_{i,1}\leq A_ik_{i,1}-R_1 b_{i,0} \\
&c_{i,0}\geq 0, c_{i,1}\geq 0, k_{i,1}\geq 0
\end{align*}
\end{subequations}

Without loss of generality, assume that $A_2>A_1$. Since there is no borrowing constraint, the agent $1$ does not produce and we find that
\begin{align*}
R_1&=A_2\\
- b_{1,0} &=\frac{\beta_1}{1+\beta_1}w_{1,0}\\
k_{2,1}- b_{2,0} &=\frac{\beta_2}{1+\beta_2}w_{2,0}\\
\Rightarrow k_{2,1}&=\frac{\beta_2}{1+\beta_2}w_{2,0}+ b_{2,0} =\frac{\beta_2}{1+\beta_2}w_{2,0}- b_{1,0} =\frac{\beta_2}{1+\beta_2}w_{2,0}+\frac{\beta_1}{1+\beta_1}w_{1,0}.
\end{align*}
The output at date $1$ equals:
\begin{align*}
Y_1=A_2k_{2,1}=A_2\Big(\frac{\beta_2}{1+\beta_2}w_{2,0}+\frac{\beta_1}{1+\beta_1}w_{1,0}\Big).
\end{align*}
\end{proof}

\begin{proof}[{\bf Proof of Proposition \ref{linear2-result}}]
Consider an equilibrium. We have $T_0=\sum_{i=1}^m\tau_{i,0} w_{i,0}$. So, we have $D_0=\gamma_{d,0}T_0=\gamma_d\sum_{i=1}^2\tau_{i,0} w_{i,0}.$
The budget constraints of agent $i$ are
\begin{align*}
 &c_{i,0}+k_{i,1}\leq (1-\tau_{i,0})w_{i,0}+ b_{i,0} +\gamma_{i,0}T_0\\
&c_{i,1}\leq (1-\tau_{i,1})\mathcal{A}_i(D_0)k_{i,1}-R_1 b_{i,0} +\gamma_{i,1}T_1.
\end{align*}

We write the FOCs
\begin{align*}u_i'(c_{i,0})&=\beta_i R_1 u_i'(c_{i,1})\\
u_i'(c_{i,0})&=\beta_i (1-\tau_{i,1})\mathcal{A}_i(D_0)F_i'(k_{i,1})u_i'(c_{i,1})+\mu_{i,1}=\beta_i (1-\tau_{i,1})\mathcal{A}_i(D_0)u_i'(c_{i,1})+\mu_{i,1}\\
 \mu_{i,1}&\geq 0,\quad  \mu_{i,1}k_{i,1}=0.
\end{align*}
This implies that $R_1\geq (1-\tau_{i,1})\mathcal{A}_i(D_0)$ for any $i$.

Since $(1-\tau_{2,1})(1+a_2\gamma_{d,0}T_0)A_2>(1-\tau_{1,1})(1+a_1\gamma_{d,0}T_0)A_1$, we have 
\begin{align}\label{R1formulas}
R_1=(1-\tau_{2,1})\mathcal{A}_2(D_0).\end{align}
By consequence,  $R_1=(1-\tau_{2,1})\mathcal{A}_2(D_0)>(1-\tau_{1,1})\mathcal{A}_1(D_0)$, which implies that $\mu_{1,1}>0$. Thus, $k_{1,1}=0$.  

In equilibrium, we have, for $i=1,2$,
\begin{align*}
 &c_{i,0}+k_{i,1}= (1-\tau_{i,0})w_{i,0}+ b_{i,0} +\gamma_{i,0}T_0\\
&c_{i,1}= (1-\tau_{i,1})\mathcal{A}_i(D_0)F_i(k_{i,1})-R_1 b_{i,0} +\gamma_{i,1}T_1.
\end{align*}
Then, we find that 
\begin{align*}
c_{i,1}=\beta_1R_1c_{i,0} \text{ for any } i=1,2\\
-R_1 b_{1,0} +\gamma_{1,1}T_1=\beta_1R_1\big((1-\tau_{1,0})w_{1,0}+ b_{1,0} +\gamma_{1,0}T_0\big)
\end{align*}
Then, we can find the saving of agent $1$ by
\begin{align*}
- b_{1,0} R_1(1+\beta_1)&=\beta_1R_1\Big((1-\tau_{1,0})w_{1,0}+\gamma_{1,0}T_0\Big)-\gamma_{1,1}T_1\\
- b_{1,0} &=\frac{\beta_1R_1\Big((1-\tau_{1,0})w_{1,0}+\gamma_{1,0}T_0\Big)-\gamma_{1,1}T_1}{(1+\beta_1)R_1}\\
&=\frac{\beta_1\Big((1-\tau_{1,0})w_{1,0}+\gamma_{1,0}T_0\Big)}{(1+\beta_1)}-\frac{\gamma_{1,1}T_1}{(1+\beta_1)R_1}.
\end{align*}
We now look at the agent 2's problem:
\begin{align*}
c_{2,1}&=\beta_2R_1c_{2,0}\\
\Leftrightarrow (1-\tau_2)F_2(k_{2,1})-R_1 b_{2,0} +\gamma_{2,1}T_1&=\beta_2R_1\Big((1-\tau_{2,0})w_{2,0}+ b_{2,0} +\gamma_{2,0}T_0-k_{2,1}\Big)\\
\Leftrightarrow (1+\beta_2)R_1(k_{2,1}- b_{2,0} )&=\beta_2R_1\Big((1-\tau_{2,0})w_{2,0}+\gamma_{2,0}T_0\Big)-\gamma_{2,1}T_1.
\end{align*}

Then, combining with $ b_{1,0} + b_{2,0} =0$, we can compute the capital of agent $2$
\begin{align}
k_{2,1}&=\frac{\beta_2R_1\Big((1-\tau_{2,0})w_{2,0}+\gamma_{2,0}T_0\Big)-\gamma_{2,1}T_1}{(1+\beta_2)R_1}+ b_{2,0} \notag\\
\label{k21}&=\frac{\beta_2R_1\Big((1-\tau_{2,0})w_{2,0}+\gamma_{2,0}T_0\Big)-\gamma_{2,1}T_1}{(1+\beta_2)R_1}+\frac{\beta_1R_1\Big((1-\tau_{1,0})w_{1,0}+\gamma_{1,0}T_0\Big)-\gamma_{1,1}T_1}{(1+\beta_1)R_1}.
\end{align}
Now, recall that $$T_1=\tau_{1,1}\mathcal{A}_1(D_0)F_1(k_{1,1})+\tau_{2,1}\mathcal{A}_2(D_0)F_2(k_{2,1})=\tau_{2,1}\mathcal{A}_2(D_0)k_{2,1}.$$
Combining with (\ref{R1formulas}), we have
\begin{align}\label{T1}
\frac{T_1}{R_1}=\frac{\tau_{2,1}\mathcal{A}_2(D_0)k_{2,1}}{(1-\tau_{2,1})\mathcal{A}_2(D_0)}=\frac{\tau_{2,1}}{(1-\tau_{2,1})}k_{2,1}.
\end{align}
By substituting this in (\ref{k21}), we find that
\begin{align*}
&k_{2,1}\Big(1+(\frac{\gamma_{2,1}}{1+\beta_2}+\frac{\gamma_{1,1}}{1+\beta_1})\frac{\tau_{2,1}}{(1-\tau_{2,1})}\Big)=\frac{\beta_2\Big((1-\tau_{2,0})w_{2,0}+\gamma_{2,0}T_0\Big)}{(1+\beta_2)}+\frac{\beta_1\Big((1-\tau_{1,0})w_{1,0}+\gamma_{1,0}T_0\Big)}{(1+\beta_1)}\notag\\
=&\frac{\beta_2\Big((1-\tau_{2,0})w_{2,0}+\gamma_{2,0}(\tau_{1,0} w_{1,0}+\tau_{2,0} w_{2,0})\Big)}{(1+\beta_2)}+\frac{\beta_1\Big((1-\tau_{1,0})w_{1,0}+\gamma_{1,0}(\tau_{1,0} w_{1,0}+\tau_{2,0} w_{2,0})\Big)}{(1+\beta_1)}.
\end{align*}
By consequence, we obtain (\ref{K1}). 

Then, the GDP of the economy at date $1$ equals 
\begin{align*}
Y_1&=\underbrace{\sum_{i=1}^2c_{i,1}}_\text{Consumption}=\sum_{i=1}^2(1-\tau_{i,1})\mathcal{A}_i(D_0)F_{i}(k_{i,1})+\big(\sum_{i=1}^2\gamma_{i,1}\big)T_1\\
&=\sum_{i=1}^2\mathcal{A}_i(D_0)F_{i}(k_{i,1})-\big(1-\sum_{i=1}^2\gamma_{i,1})T_1.
\end{align*}
Recall the notation $\gamma_{c,1}\equiv 1-\sum_{i=1}^2\gamma_{i,1}$.  According to (\ref{T1}), we have $
T_1=\frac{\tau_{2,1}}{1-\tau_{2,1}}R_1k_{2,1}=\tau_{2,1}A_2k_{2,1}$. Then, we have
\begin{align*}
Y_1&=\mathcal{A}_2(D_0)k_{2,1}-\gamma_{c,1}T_1=\big(1-\gamma_{c,1}\tau_{2,1}\big)\mathcal{A}_2(D_0)k_{2,1}\\
\mathcal{A}_2(D_0)&=A_2(1+a_2D_0)=A_2(1+a_2\gamma_{d,0}T_0)=A_2(1+a_2(\gamma_{c,1}-\gamma_{c,0})T_0)
\end{align*}

The consumption of agent $1$ is
\begin{align*}
c_{1,0}&=(1-\tau_{1,0})w_{1,0}+\gamma_{1,0}T_0+ b_{1,0} 
\\
&=(1-\tau_{1,0})w_{1,0}+\gamma_{1,0}T_0-\frac{\beta_1R_1\Big((1-\tau_{1,0})w_{1,0}+\gamma_{1,0}T_0\Big)-\gamma_{1,1}T_1}{(1+\beta_1)R_1}\\
&=\frac{1}{1+\beta_1}\Big((1-\tau_{1,0})w_{1,0}+\gamma_{1,0}T_0\Big)+\frac{\beta_1}{1+\beta_1}\gamma_1\tau_2A_2k_{2,1}\\
c_{1,1}&=\beta_1R_1c_{1,0}=\beta_1(1-\tau_{2,1})A_2c_{1,0}.
\end{align*} 

\end{proof}

\begin{remark}[continuum of equilibria]
\label{continuum}
{\normalfont Let us consider an economy as in Proposition \ref{linear2-result} but we assume that the two productivities (after taxes) are equal, i.e., 
\begin{align}\label{formulaR*}
(1-\tau_{2,1})(1+a_2\gamma_{d,0}T_0)A_2=(1-\tau_{1,1})(1+a_1\gamma_{d,0}T_0)A_1\equiv R^*.    
\end{align} Recall that $T_0=T_0^*\equiv \sum_{i=1}^m\tau_{i,0} w_{i,0}$.

Denote $e_{i,0}\equiv (1-\tau_{1,0})w_{1,0}+\gamma_{1,0}T_0$. 
The set of solutions to the agent $i$'s problem is all lists $(c_{i,0},c_{i,1},k_{i,1}, b_{i,0} )$ satisfying the following condition:
\begin{subequations}\label{ContinuumAgenti}
\begin{align}
k_{i,1}&=b_{i,0}+\frac{\beta_ie_{i,0}}{(1+\beta_i)}-\frac{\gamma_{1,1}T_1}{(1+\beta_i)R^*}>0\\
c_{i,O}&=\frac{e_{i,0}}{(1+\beta_i)}+\frac{\gamma_{1,1}T_1}{(1+\beta_i)R^*}, \quad c_{i,1}=R^*\Big(\frac{\beta_ie_{i,0}}{(1+\beta_i)}-\frac{\gamma_{1,1}T_1}{(1+\beta_i)R^*}\Big)+\gamma_{i,1}T_1
\end{align}
\end{subequations}
Given $T_1\geq 0$, we observe that the set of competitive equilibria is the set of all lists $\big((c_{i,0},c_{i,1},k_{i,1}, b_{i,0} )_{i=1}^m, R_1\big)$ satisfying $R_1=R^*$, $\sum_{i}b_{i,0}=0$ and (\ref{ContinuumAgenti}) for any $i$. 

Assume that $\frac{\beta_ie_{i,0}}{(1+\beta_i)}-\frac{\gamma_{1,1}T_1}{(1+\beta_i)R^*}>0$  for any $i$. In this case, there exist a continuum of equilibrium allocations (however, recall that there exists a unique equilibrium interest rate, which equals $R^*$ in (\ref{formulaR*})).

We now investigate the political-economic equilibrium. To find the fixed point $T_1$, we solve the following equation
\begin{align}
T_1=\sum_{i}\tau_{i,1}(1+a_i\gamma_{d,0}T_0)A_ik_{i,1}
\end{align}
Denote $h_i\equiv \tau_{i,1}(1+a_i\gamma_{d,0}T_0)A_i.$ We look at 
\begin{align}
T_1&=\sum_{i}h_{i,1} \Big(b_{i,0}+\frac{\beta_ie_{i,0}}{(1+\beta_i)}-\frac{\gamma_{i,1}T_1}{(1+\beta_i)R^*}\Big)\\
\Leftrightarrow T_1&=T_1^*\equiv \frac{\sum_{i} h_{i,1} \Big(b_{i,0}+\frac{\beta_ie_{i,0}}{1+\beta_i}\Big)}{1+\sum_{i}h_{i,1}\frac{\gamma_{i,1}}{(1+\beta_i)R^*}}. \label{T1formulas}
\end{align}
To sum up, we obtain the following result:  Assume that $\frac{\beta_ie_{i,0}}{(1+\beta_i)}-\frac{\gamma_{1,1}T_1}{(1+\beta_i)R^*}>0$ $\forall i$. Take $(b_{i,0})_i\in \rr^m$ be such that 
\begin{align}
b_{i,0}+\frac{\beta_ie_{i,0}}{1+\beta_i}-\frac{\gamma_{1,1}T_1}{(1+\beta_i)R^*}>0, \quad b_{i,0}+\frac{\beta_ie_{i,0}}{1+\beta_i}&>0 \text{  }\forall i \text{ and, } \sum_{i}b_{i,0}=0.
\end{align}
Then the list $\big((c_{i,0},c_{i,1},k_{i,1}, b_{i,0} )_{i=1}^m, R_1,T_0,T_1,G_0, G_1, D_0 \big)$ satisfying three conditions (1) $R_1=R_1^*$, (2) $G_0=T_0=T_0^*$, $G_1=T_1=T_1^*$ (see (\ref{T1formulas})) and (3) condition (\ref{ContinuumAgenti}) for any $i$, constitutes a political-economic equilibrium.
}
\end{remark}

\begin{proof}[{\bf Proof of Proposition \ref{inaction2}}]
First, it is easy to check all assumptions in Proposition \ref{corollary1}. Then, by applying Proposition \ref{corollary1}, we have
\begin{align}Y_1-Y_1^*
=&\Big(1-\tau_{2,1}\gamma_{c,1}\Big)A_2
\dfrac{\frac{\beta_2\big[(1-\tau_{2,0})w_{2,0}+\gamma_{2,0}T_0\big]}{(1+\beta_2)}+\frac{\beta_1\big[(1-\tau_{1,0})w_{1,0}+\gamma_{1,0}T_0\big]}{(1+\beta_1)}}{1+(\frac{\gamma_{2,1}}{1+\beta_2}+\frac{\gamma_{1,1}}{1+\beta_1})\frac{\tau_{2,1}}{(1-\tau_{2,1})}}\\
&\quad -\Big(\frac{\beta_2}{1+\beta_2}w_{2,0}+\frac{\beta_1}{1+\beta_1}w_{1,0}\Big)A_2\\
=&A_2\frac{\beta_2}{1+\beta_2}\Big((1-\tau_{2,1}\gamma_{c,1})\frac{(1-\tau_{2,0})w_{2,0}+\gamma_{2,0}T_0}{1+(\frac{\gamma_{2,1}}{1+\beta_2}+\frac{\gamma_{1,1}}{1+\beta_1})\frac{\tau_{2,1}}{(1-\tau_{2,1})}}-w_{2,0}\Big)\\
&+A_2\frac{\beta_1}{1+\beta_1}\Big((1-\tau_{2,1}\gamma_{c,1})\frac{(1-\tau_{1,0})w_{1,0}+\gamma_{1,0}T_0}{1+(\frac{\gamma_{2,1}}{1+\beta_2}+\frac{\gamma_{1,1}}{1+\beta_1})\frac{\tau_{2,1}}{(1-\tau_{2,1})}}-w_{1,0}\Big)
\end{align}
 In this case, the output gap equals 
\begin{align}Y_1-Y_1^*
=&A_2\frac{\beta_2}{1+\beta_2}\Big(\gamma_{2,0}\tau_{1,0} w_{1,0}-(1-\gamma_{2,0})\tau_{2,0}w_{2,0}\Big)\\
&+A_2\frac{\beta_1}{1+\beta_1}\Big((1-\tau_{1,0})w_{1,0}+\gamma_{1,0}T_0-w_{1,0}\Big).
\end{align}
We see that 
\begin{align*}
(1-\tau_{2,0})w_{2,0}+\gamma_{2,0}T_0-w_{2,0}&=\gamma_{2,0}\big(\tau_{1,0} w_{1,0}+\tau_{2,0} w_{2,0}\big)-\tau_{2,0}w_{2,0}\\
&=\gamma_{2,0}\tau_{1,0} w_{1,0}-(1-\gamma_{2,0})\tau_{2,0}w_{2,0}\\
(1-\tau_{1,0})w_{1,0}+\gamma_{1,0}T_0-w_{1,0}&=\gamma_{1,0}\big(\tau_{1,0} w_{1,0}+\tau_{2,0} w_{2,0}\big)-\tau_{1,0}w_{1,0}\\
&=\gamma_{1,0}\tau_{2,0} w_{2,0}-(1-\gamma_{1,0})\tau_{1,0} w_{1,0}.
\end{align*}

\end{proof}

\begin{proof}[{\bf Proof of Proposition \ref{bad3}}]
We can check that assumptions in Proposition \ref{corollary1} holds (because condition (\ref{linear2assumption0}) becomes  $(1-\tau_{2,1})A_2>(1-\tau_{1,1})A_1>0$).

Then, by apply Proposition \ref{corollary1}, we can find the aggregate output
\begin{align}Y_1=&\Big(1-\tau_{2,1}\gamma_{c,1}\Big)(1+a_2\gamma_{d,0}T_0)A_2
\dfrac{\frac{\beta_2\big[(1-\tau_{2,0})w_{2,0}+\gamma_{2,0}T_0\big]}{(1+\beta_2)}+\frac{\beta_1\big[(1-\tau_{1,0})w_{1,0}+\gamma_{1,0}T_0\big]}{(1+\beta_1)}}{1+(\frac{\gamma_{2,1}}{1+\beta_2}+\frac{\gamma_{1,1}}{1+\beta_1})\frac{\tau_{2,1}}{(1-\tau_{2,1})}}\\
&=(1-\tau_{2,1}\gamma_{c,1})A_2\frac{\beta}{1+\beta}\frac{w_{1,0}+w_{2,0}-(1-\gamma_{1,0}-\gamma_{2,0})T_0}{1+\frac{\gamma_{1,1}+\gamma_{2,1}}{1+\beta}\frac{\tau_{2,1}}{1-\tau_{2,1}}}\\
&\leq (1-\tau_{2,1}\gamma_{c,1})A_2\frac{\beta}{1+\beta}\big(w_{1,0}+w_{2,0}-(1-\gamma_{1,0}-\gamma_{2,0})T_0\big)\\
&\leq A_2\frac{\beta}{1+\beta}\big(w_{1,0}+w_{2,0}-(1-\gamma_{1,0}-\gamma_{2,0})T_0\big)\\
&<A_1 \frac{\beta}{1+\beta}\big(w_{1,0}+w_{2,0}\big) \quad \quad \text{ (because $A_2<A_1$)}\\
&=\Big(\frac{\beta_2}{1+\beta_2}w_{2,0}+\frac{\beta_1}{1+\beta_1}w_{1,0}\Big)\max(A_1,A_2)=Y^*.
\end{align}

\end{proof}
{\small

}
\end{document}